\DeclareMathOperator{\polylog}{polylog}
\definecolor{lightgray}{rgb}{.85,.85,.85}
\newcommand {\bbR}    {\mathbb{R}}
\newcommand {\calP}   {{\cal{P}}}
\newtheorem{theorem}{Theorem}
\newtheorem{claim}[theorem]{Claim}
\newcommand{\qed}{\rule{7pt}{7pt}}
\newenvironment{proof}{\noindent{\bf Proof}\hspace*{1em}}{\qed\bigskip}
\newenvironment{proof-sketch}{\noindent{\bf Sketch of Proof}\hspace*
{1em}}{\qed\bigskip}
\newenvironment{proof-idea}{\noindent{\bf Proof Idea}\hspace*{1em}}
{\qed\bigskip}
\newenvironment{proof-of-lemma}[1]{\noindent{\bf Proof of Lemma #1}
\hspace*{1em}}{\qed\bigskip}
\newenvironment{proof-attempt}{\noindent{\bf Proof Attempt}\hspace*
{1em}}{\qed\bigskip}
\newenvironment{remark}{\noindent{\bf Remark}\hspace*{1em}}{\bigskip}
\def\fnum@figure{{\bf Figure \thefigure}}
\def\fnum@table{{\bf Table \thetable}}
\long\def\@mycaption#1[#2]#3{\addcontentsline{\csname
   ext@#1\endcsname}{#1}{\protect\numberline{\csname
   the#1\endcsname}{\ignorespaces #2}}\par
   \begingroup
     \@parboxrestore
     \small
     \@makecaption{\csname fnum@#1\endcsname}{\ignorespaces #3}\par
   \endgroup}
\def\mycaption{\refstepcounter\@captype \@dblarg{\@mycaption\@captype}}
\newcommand{\mathify}[1]{\ifmmode{#1}\else\mbox{$#1$}\fi}
\newcommand{\E}{\mathbb{E}}
\newcommand{\bigO}O
\newcommand{\eps}{\varepsilon}
\newcommand{\calV}{\mathcal{V}}
\newcommand{\R}{{\mathbb R}}
\newcommand{\ignore}[1]{}
\newcommand{\email}[1]{\href{mailto:#1}{{\tt #1}}}
\newtheorem{tempclaim}{Claim}
\newcommand{\TmpClaim}[2]{\setcounter{tempclaim}{#1}\addtocounter{tempclaim}{0}\begin{tempclaim}#2\end{tempclaim}}
\title{Improved Approximation for the Directed Spanner Problem}
\author{Arnab Bhattacharyya\footnote{Supported in part by NSF Awards
    0514771, 0728645, and 0732334. Part of the work was completed
    during a visit to the IBM T.J.~Watson Research Center.}
\\CSAIL, MIT\\\email{abhatt@mit.edu} 
\and Konstantin Makarychev\\IBM T.J.~Watson Research Center\\\email{konstantin@us.ibm.com}}
\date{}
\begin{document}
\maketitle

\begin{abstract}
We prove that the size of the sparsest directed $k$-spanner of a graph
can be approximated in polynomial time to within a factor of
$\tilde{O}(\sqrt{n})$, for all $k \geq 3$.  This improves the
  $\tilde{O}(n^{2/3})$-approximation recently shown by Dinitz   
and Krauthgamer~\cite{DK10}.  
\end{abstract}

\section{Introduction}
A spanner of a graph generally denotes a sparse subgraph which
preserves all pairwise distances up to a given approximation.  More
specifically, given a graph $G = (V,E)$ and an integer $k \geq 1$,
define the {\em $k$-spanner} of $G = (V,E)$ to be a subgraph $H = (V,
E_H)$ such that, for any two vertices $u, v \in V$,
\begin{equation*}
dist_H(u,v) \leq k \cdot dist_G(u,v)
\end{equation*}
If each edge of graph $G$ has an associated nonnegative length, then
$dist_G(u,v)$ denotes the smallest sum of the lengths of edges along a path
from $u$ to $v$. Spanners have numerous applications, such as efficient
routing~\cite{Cowen01, CowenWagner, PelUp89, roundtrip-spanners,
  TZ01b}, simulating synchronized protocols in unsynchronized 
networks~\cite{PelegU89}, parallel, distributed and streaming
algorithms for approximating shortest paths~\cite{coh98, coh00, e01, FKMSZ08}, and algorithms
for distance oracles~\cite{BasSen, approx-distance-oracles}. 

For integer $k \geq 1$, the computational problem {\sc Directed
  $k$-Spanner} is the task of finding the minimum number of edges in
a $k$-spanner of an input directed graph $G$.    Peleg and
Sch\"affer~\cite{PelegSchaffer} and   Cai~\cite{Cai} show that
 {\sc Directed  $k$-Spanner} is \textsf{NP}-hard for every $k
\geq 2$.  The approximability of {\sc Directed $k$-Spanner} has also
been well-studied.  When $k=2$, Kortsarz and Peleg~\cite{KortPel}, and
Elkin and Peleg~\cite{ElkinPeleg01}, showed a tight~$O(\log n)$
approximation. For $k=3$ and $k=4$, Berman, Raskhodnikova and Ruan
\cite{BRR10} designed an $\tilde{O}(\sqrt{n})$-approximation
algorithm.  Dinitz and Krauthgamer~\cite{DK10} independently showed
$\tilde{O}(\sqrt{n})$ approximability for $k=3$ and gave an
$\tilde{O}(n^{2/3})$-approximation algorithm valid for all $k \geq 4$,
thus obtaining for the first time for this problem an approximation 
ratio that does not degrade with increasing $k$.  It is known
\cite{ElkinPeleg07} that for every $\eps, \delta \in (0,1)$,
{\sc Directed $k$-Spanner} with $3 \leq k = o(n^{1-\delta})$ is
inapproximable in polynomial time to within a factor of
$2^{\log^{1-\eps}n}$, unless $\mathsf{NP} \subseteq
\mathsf{DTIME}(n^{\polylog n})$.   

Our main result is the following.
\begin{theorem}
For any integer $k \geq 3$, there is a polynomial time algorithm that,
in expectation, finds an $\tilde{O}(\sqrt{n})$-approximation
for the {\sc Directed $k$-Spanner} problem.
\end{theorem}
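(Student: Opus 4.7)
The plan is to round a natural flow-based LP by splitting the demand pairs $D = \{(u,v) : (u,v) \in E(G)\}$ into \emph{thick} and \emph{thin} cases with a threshold $\beta = \Theta(\sqrt{n \log n})$: thick pairs are covered by random vertex sampling, thin pairs by scaled LP rounding. Each contribution will be $\tilde{O}(\sqrt{n}) \cdot \mathrm{OPT}$, improving on the $n^{2/3}$-balance of \cite{DK10}.

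First I would set up and solve the flow LP: for each $(u,v) \in D$, introduce a path variable $f^{(u,v)}_P$ for every directed $u$-to-$v$ path $P$ of length at most $k$, and an edge purchase variable $x_e \in [0,1]$ for each $e \in E(G)$; require $\sum_P f^{(u,v)}_P \geq 1$ and $\sum_{P \ni e} f^{(u,v)}_P \leq x_e$, and minimize $\sum_e x_e$. Although there are exponentially many path variables, the LP is polynomial-time solvable (by ellipsoid with a dual separation oracle, or by an equivalent polynomial-sized layered flow formulation), and its optimum $\mathrm{LP}^*$ lower-bounds $\mathrm{OPT}$. Using the optimal solution, define the witness set $V_{u,v} = \{w : \exists\, P \ni w \text{ with } f^{(u,v)}_P > 0\}$, and call $(u,v)$ \emph{thick} if $|V_{u,v}| \geq \beta$ and \emph{thin} otherwise.

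To handle thick pairs, sample a set $S \subseteq V$ with each vertex chosen independently with probability $\tilde{\Theta}(1/\beta)$, so $|S| = \tilde{O}(\sqrt{n})$ in expectation; for each $s \in S$ add to $H$ a depth-$k$ shortest-path in-arborescence and out-arborescence in $G$ rooted at $s$. By a Chernoff bound on $|V_{u,v} \cap S|$ followed by a union bound over $D$, every thick pair has some sampled witness $w \in V_{u,v} \cap S$ with high probability, and the concatenation of the $u$-to-$w$ and $w$-to-$v$ arborescence paths is a length-$\leq k$ $u$-to-$v$ path in $H$; a careful accounting of the arborescence edges against $\mathrm{OPT}$ (using that any feasible spanner of an instance with $|D|$ demands already has $\Omega(|D|/\polylog n)$ edges, or by preprocessing trivially small instances) bounds the thick contribution by $\tilde{O}(\sqrt{n}) \cdot \mathrm{OPT}$. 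To handle thin pairs, scale-and-round the LP by including each edge $e$ independently with probability $\min(1,\alpha x_e)$ for $\alpha = \tilde{O}(\beta)$; the expected cost is $\tilde{O}(\beta) \cdot \mathrm{LP}^* = \tilde{O}(\sqrt{n}) \cdot \mathrm{OPT}$, and a local argument inside each thin pair's $O(\beta k)$-edge LP support shows that some length-$\leq k$ $u$-to-$v$ path survives with high probability.

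The main obstacle will be the thin-pair rounding analysis: the flow LP has $\poly(n)$ integrality gap in the worst case, so the $\tilde{O}(\beta)$ scaling only succeeds because thinness localizes the survival event. Making this rigorous requires lower-bounding the survival probability of each of the $O(\beta^{k-1})$ candidate short paths inside a thin pair's support by roughly $\alpha^{k}\prod_{e \in P} x_e$, then combining these via an FKG- or Chernoff-style inequality on monotone survival events to show that at least one short path survives with probability $1 - n^{-\Omega(1)}$; the tiny failure fraction is patched by a cheap deterministic repair step (e.g., one additional short path per failed pair). A secondary subtlety is that thickness is defined from the LP solution rather than intrinsically from $G$, so one must verify that the two analyses charge against the same $\mathrm{LP}^*$ without double-counting its mass and that the choice $\beta = \Theta(\sqrt{n \log n})$ actually balances the two contributions up to polylogarithmic factors.
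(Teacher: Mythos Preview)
Your framework---thick/thin split, random shortest-path trees for thick pairs, scaled LP rounding for thin pairs---is exactly the paper's (and already Dinitz--Krauthgamer's). The entire contribution of this paper is the thin-pair rounding analysis, and that is precisely where your proposal has a real gap.

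Your suggested route for thin pairs---enumerate $O(\beta^{k-1})$ short paths, lower-bound each one's survival probability by roughly $\alpha^{k}\prod_{e\in P}x_e$, then combine via an FKG/Chernoff-type inequality---does not work. First, the theorem is for \emph{all} $k\ge 3$, so $\beta^{k-1}$ is not polynomial and your enumeration collapses. Second, even for constant $k$, the per-path survival probability $\prod_{e\in P}\min(1,\alpha x_e)$ can be minuscule (imagine the unit flow spread over many near-disjoint paths, each with tiny $x_e$'s), and FKG goes the wrong way: for the decreasing events ``path $P$ fails'' it gives $\Pr[\text{all paths fail}]\ge\prod_P\Pr[P\text{ fails}]$, which is useless for an upper bound. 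No generic correlation inequality converts a sum of tiny per-path survival probabilities into a $1-n^{-\Omega(1)}$ guarantee; this is exactly the obstacle that left \cite{DK10} at $\tilde O(n^{2/3})$.

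The paper's key idea is to abandon paths as the covering family. For a thin pair (here $|\calV_{u,v}|\le\sqrt{n}$) it ranges over all arborescences $T\subset G[\calV_{u,v}]$ rooted at $u$, and to each $T$ with $d_T(u,v)>k\cdot\ell(u,v)$ associates the ``shortcut'' edge set $S_T=\{(w_1,w_2):d_T(u,w_2)>d_T(u,w_1)+\ell(w_1,w_2)\}$. Two short claims show (i) $H'$ contains a length-$\le k\cdot\ell(u,v)$ $u$--$v$ path iff $E_H'\cap S_T\neq\varnothing$ for every such $T$, and (ii) every such $S_T$ carries LP mass $\sum_{e\in S_T}x_e\ge 1$. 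Hence $\Pr[E_H'\cap S_T=\varnothing]\le e^{-\alpha\sqrt{n}}$ per arborescence, while the number of arborescences is at most $|\calV_{u,v}|^{|\calV_{u,v}|}\le e^{\frac12\sqrt{n}\log n}$; a union bound with $\alpha=\Theta(\log n)$ finishes. (For unit lengths and bounded $k$ the paper also gives an alternative layered-graph argument that iteratively deletes ``light'' vertices carrying flow below $1/\sqrt{2nk}$ and then walks greedily through the survivors.) This arborescence duality is the missing idea; without it your outline is the \cite{DK10} skeleton with the hard step flagged but unresolved.
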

This matches the $\tilde{O}\sqrt{n})$ bound conjectured by Dinitz and
Krauthgamer.  Our algorithm, similar to that of \cite{DK10} and the
earlier \cite{tc-spanners-soda}, operates by letting the edges of the
directed $k$-spanner be the union of two subsets of edges of the
original graph, the first obtained by rounding the solution to
a linear programming (LP) relaxation of the problem and the second obtained
by edges of shortest-path trees growing from randomly selected
vertices.  Our main technical innovation is in designing and analyzing
a new rounding algorithm that selects each edge of the original graph
independently with probability proportional to its LP value.  

\section{Proof of Main Result}

\subsection{The LP Relaxation}
The LP relaxation that we use for the {\sc Directed
  $k$-Spanner} problem is exactly the same as the flow-based LP introduced
by Dinitz and Krauthgamer \cite{DK10}.  Let us reproduce it here.
Suppose we are given a directed graph $G = (V,E)$ with a length
function $\ell: E \to \R^{\geq 0}$.  For every edge $(u,v) \in E$, define
$\mathcal{P}_{u,v}$ to be the set of all directed paths $p$ in $G$ from $u$ to
$v$ such that the length of $p$ is at most $k$ 
times the length of the shortest path in $G$ from $u$ to $v$.  Thus,
for every edge $(u,v) \in E$, the $k$-spanner of $G$ must contain at
least one path from $\mathcal{P}_{u,v}$.  The LP will have
variables $x_e$ for each edge 
$e \in E$, representing whether or not $e$ is included in the spanner,
and variables $f_p$ for each path $P$ in $\bigcup_{(u,v) \in E}
\mathcal{P}_{u,v}$, representing the flow along the path $P$. The LP
is then as follows:
\begin{center}
\colorbox{lightgray}{
\begin{minipage}{3in}
$$\min \sum_{e \in E} x_e$$
subject to:
\begin{align*}
&\sum_{P \in \mathcal{P}_{u,v}: e \in P}f_p \leq x_e & \forall (u,v)
\in E, \forall e \in E\\
&\sum_{P \in \mathcal{P}_{u,v}} f_p \geq 1 & \forall (u,v) \in E\\
&x_e \geq 0 & \forall e \in E\\
&f_p \geq 0 & \forall (u,v) \in E, \forall P \in \mathcal{P}_{u,v}
\end{align*}
\end{minipage}
}
\end{center}

Notice that the number of variables can be exponential for
large $k$, and hence, \textit{a priori}, it might not be clear that
the LP can be solved optimally in polynomial time.  However, by a 
separation oracle argument, one can find an approximate solution in polynomial
time for any $k$.
\begin{theorem}[Theorem 2.1 in Dinitz and Krauthgamer~\cite{DK10}]
There is a polynomial time $(1+\eps)$-approximation algorithm for the
above LP for any constant $\eps > 0$.
\end{theorem}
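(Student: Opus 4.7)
The LP has polynomially many explicit constraints but potentially exponentially many path variables $f_p$, so the natural strategy is to work with the dual. Introducing multipliers $y_{(u,v),e} \geq 0$ for the capacity constraints and $z_{(u,v)} \geq 0$ for the lower-bound constraints, the dual becomes
\[
\max \sum_{(u,v) \in E} z_{(u,v)} \quad \text{subject to} \quad \sum_{(u,v) \in E} y_{(u,v),e} \leq 1 \ \forall e \in E, \qquad \sum_{e \in P} y_{(u,v),e} \geq z_{(u,v)} \ \forall (u,v) \in E,\ P \in \mathcal{P}_{u,v},
\]
together with nonnegativity. The dual has only $O(|E|^2)$ variables but exponentially many constraints, so I would run the ellipsoid method on it, using an approximate separation oracle.

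The capacity-type constraints are polynomially many and trivial to check. The nontrivial separation task is, for each $(u,v) \in E$, to decide whether $\min_{P \in \mathcal{P}_{u,v}} \sum_{e \in P} y_{(u,v),e} \geq z_{(u,v)}$. This is precisely the \emph{restricted shortest path} problem: minimize the $y$-weight of a $u$-to-$v$ path subject to its $\ell$-length being at most $k \cdot dist_G(u,v)$. Although weakly $\mathsf{NP}$-hard, it admits a classical FPTAS (Hassin; Lorenz--Raz) running in time polynomial in $n$ and $1/\eps$. Calling the FPTAS with tolerance $\eps' = \Theta(\eps)$ yields an approximate oracle: either every $P \in \mathcal{P}_{u,v}$ satisfies $\sum_{e \in P} y_{(u,v),e} \geq z_{(u,v)}/(1+\eps')$ (approximate feasibility), or the FPTAS returns a specific path $P$ that certifiably violates the exact constraint. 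Feeding this oracle into the ellipsoid method in the standard way approximately solves the dual in polynomial time and, as a byproduct, produces a polynomial-size set $\mathcal{P}' \subseteq \bigcup_{(u,v) \in E} \mathcal{P}_{u,v}$ consisting of the paths returned across all iterations.

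I would finish by solving the primal LP restricted to the variables $\{f_p : p \in \mathcal{P}'\}$; this is a polynomially-sized LP solvable exactly, and LP duality together with the approximate-separation guarantee implies that its optimum is within a $(1+\eps)$ factor of the original LP's optimum (after possibly rescaling $x$ by $1+\eps'$ to absorb the slack). The main obstacle is the careful accounting needed to turn an approximate separation oracle into a rigorous $(1+\eps)$-approximation guarantee for the ellipsoid method -- this is a classical but subtle point, since the FPTAS could return a path whose $y$-weight lies in the ``ambiguous zone'' $[z_{(u,v)}/(1+\eps'),\, z_{(u,v)}]$. A secondary nuisance is handling arbitrary real edge lengths, which requires an initial scaling/rounding of $\ell$ to polynomially bounded integers at the cost of a further $(1+O(\eps))$ factor in order to invoke the FPTAS cleanly.
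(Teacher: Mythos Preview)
The paper does not actually prove this statement; it is quoted as a black box from Dinitz and Krauthgamer~\cite{DK10}, with only the one-line remark that ``by a separation oracle argument, one can find an approximate solution in polynomial time for any $k$.'' Your proposal is consistent with that remark and is, in fact, precisely the argument Dinitz and Krauthgamer give: pass to the dual, which has $O(|E|^2)$ variables and exponentially many path constraints, separate via the restricted-shortest-path FPTAS, run the ellipsoid method, and then solve the primal restricted to the polynomially many paths generated. So your approach matches the intended one, and the technical caveats you flag (handling the FPTAS's ``ambiguous zone'' inside the ellipsoid framework, and pre-rounding edge lengths) are exactly the points that need care in a full write-up.
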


\subsection{Constructing the \texorpdfstring{$k$}{k}-spanner}
Given a fractional solution $(x,f)$ to the above LP, we now describe how
to construct a $k$-spanner $H = (V,E_H)$ for the input graph $G =
(V,E)$.  The edge set $E_H \subseteq E$ is constructed via the following simple randomized algorithm (with a
parameter $\alpha$):
\begin{center}
\colorbox{lightgray}{
\begin{minipage}{6in}
\begin{enumerate}
\item
Let $E_H = \varnothing$.
\item
For each edge $e \in E$ independently, add $e$ to $E_H$ with
probability $\min(\alpha x_e \sqrt{n}, 1)$. 
\item
Let $S$ be a set of vertices, each chosen independently at random from $V$ with probability
$\min (\alpha /\sqrt{n}, 1)$.  For each $v \in S$, add
the edges of the outward shortest-path tree and the inward shortest-path tree rooted at $v$ to $E_H$.
\end{enumerate}
\end{minipage}}
\end{center}

We first give a proof for a special case when all edges have unit
lengths (i.e., $\ell(e) = 1$ for all $e \in E$) and $k$ is upper-bounded by a
constant.  Then, we give an alternative proof that works for general
edge lengths and arbitrary $k$. 

\subsection{Unit length edges}

\begin{theorem}\label{thm:t1}
For graphs with unit edge lengths, the approximation ratio of the
randomized algorithm described above 
(with $\alpha = 10\sqrt{k} \log n$) is $30\sqrt{nk}\log n$.
\end{theorem}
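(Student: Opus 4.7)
My plan is to control both $\E|E_H|$ and the probability that $H$ fails to be a valid $k$-spanner; if the spanner check ever fails I would fall back to $E_H=E$, which (since the failure probability will be polynomially small) only adds a lower-order term to $\E|E_H|$. For the expected size I treat the two construction steps separately. By linearity, step 2 adds at most $\sum_{e\in E}\alpha x_e\sqrt n=\alpha\sqrt n\cdot\mathrm{LP}^*\le\alpha\sqrt n\cdot\mathrm{OPT}=10\sqrt{nk}\log n\cdot\mathrm{OPT}$ edges in expectation, and step 3 adds at most $2(n-1)|S|$ edges with $\E|S|\le\alpha\sqrt n$, contributing $O(\alpha n^{3/2})$ in expectation. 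To compare this second term to $\mathrm{OPT}$, I would discard isolated vertices of $G$ (WLOG) so that every vertex is an endpoint of some edge; then any valid $k$-spanner needs at least one edge incident to each vertex, forcing $\mathrm{OPT}\ge n/2$. The two contributions then sum to at most $30\sqrt{nk}\log n\cdot\mathrm{OPT}$.

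For correctness, the plan is to show that for each fixed $(u,v)\in E$, $H$ contains a $u\to v$ path of length at most $k$ with probability $\ge 1-n^{-3}$, and then union-bound over the $\le n^2$ edges. Fix $(u,v)$, let $f=f^{(u,v)}$ (WLOG $\sum_P f_P=1$), and put $g_w=\sum_{P\in\mathcal{P}_{u,v},\,w\in P}f_P$. Every path in $\mathcal{P}_{u,v}$ has at most $k+1$ vertices, so $\sum_w g_w\le k+1$. Call $w$ \emph{thick} if $g_w\ge 1/\sqrt n$ and let $T_{u,v}$ be the set of thick vertices, so $|T_{u,v}|\le(k+1)\sqrt n$. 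In \textbf{Case A} ($|T_{u,v}|\ge 4\sqrt n\log n/\alpha$), step 3's sample $S$ misses $T_{u,v}$ with probability at most $(1-\alpha/\sqrt n)^{|T_{u,v}|}\le n^{-4}$; whenever $w\in S\cap T_{u,v}$ we have $dist_G(u,w)+dist_G(w,v)\le k$ (since $w$ lies on some $P\in\mathcal{P}_{u,v}$), and the in- and out-trees at $w$ already sit inside $H$, so a valid $u\to v$ path is present. In \textbf{Case B} ($|T_{u,v}|<4\sqrt n\log n/\alpha$), I need step 2's LP rounding to select some $P\in\mathcal{P}_{u,v}$ entirely. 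Using the LP constraint $f_P\le x_{e'}$ for each $e'\in P$ gives $\Pr[P\subseteq E_H]\ge\prod_{e'\in P}\min(\alpha f_P\sqrt n,\,1)$, and I would exploit the thin-vertex condition $g_w<1/\sqrt n$ on internal vertices to pull out from the flow a family of many nearly internally disjoint paths, then apply a Janson-type inequality (or a second-moment argument) on that family to conclude that at least one path is fully selected with probability $\ge 1-n^{-3}$.

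The main obstacle is the Case B analysis. For a single path $P$ the probability $\Pr[P\subseteq E_H]$ can be polynomially small, so a naive union bound over paths is useless; the whole argument rides on using the absence of thick vertices as a decorrelation device for the path-selection events, so that many weakly correlated paths can be aggregated via a concentration inequality. Quantifying this decorrelation, and choosing the Case A/Case B threshold so that both halves of the argument run with the single parameter $\alpha=10\sqrt k\log n$, is the technical heart of the proof.
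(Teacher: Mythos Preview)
Your cost bookkeeping matches the paper's, except that the paper assumes $G$ is weakly connected and uses $\mathrm{OPT}\ge n-1$; with your weaker $\mathrm{OPT}\ge n/2$ you get $50\sqrt{nk}\log n$ rather than $30\sqrt{nk}\log n$, but that is cosmetic.

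The feasibility argument, however, has a genuine gap, and the paper proceeds quite differently. First, the case split: the paper splits on $|\mathcal{V}_{u,v}|=\big|\bigcup_{P\in\mathcal{P}_{u,v}}P\big|$ versus $\sqrt{n/k}$, not on your thick-vertex count $|T_{u,v}|$. These are not interchangeable. If the LP spreads one unit of flow over $m\gg\sqrt n$ internally disjoint length-$k$ paths, every internal $w$ has $g_w=1/m<1/\sqrt n$, so $|T_{u,v}|=2$ and you land in Case~B; yet $|\mathcal{V}_{u,v}|\approx (k-1)m$ is huge, and it is Step~3 (which you have discarded in Case~B) that actually covers $(u,v)$. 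Your proposed Janson/second-moment route cannot rescue this instance: the path-selection events are independent, and the expected number of fully selected paths is $m\cdot(\alpha\sqrt n\,x_e)^k=(\alpha\sqrt n)^k/m^{k-1}$, which is $o(1)$ once $m>(\alpha\sqrt n)^{k/(k-1)}$. No correlation inequality helps when $\mu\to 0$.

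Second, and more importantly, even after correcting the split to $|\mathcal{V}_{u,v}|\le\sqrt{n/k}$, the paper does \emph{not} try to certify that some fixed $P\in\mathcal{P}_{u,v}$ survives rounding. Instead it layers $\mathcal{V}_{u,v}$ by distance (copies $(w,i)$ for $i=0,\dots,k$, at most $\sqrt{nk}$ of them), iteratively deletes ``light'' layered vertices carrying residual flow below $1/\sqrt{2nk}$ (this kills at most $\sqrt{nk}\cdot 1/\sqrt{2nk}<1/2$ of the flow, so $(u,0)$ survives), and then observes that every surviving $(w,i)$ has total $x$-weight at least $1/\sqrt{2nk}$ on edges into surviving layer-$(i{+}1)$ vertices. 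After the $\alpha\sqrt n=10\sqrt{nk}\log n$ scaling, each surviving vertex therefore has an $E_H$-edge into the next surviving layer with probability $1-n^{-\Omega(1)}$; a union bound over the $\le\sqrt{nk}$ layered vertices then lets one greedily walk from $(u,0)$ to some $(v,k')$. The resulting $u\to v$ walk need not coincide with any single flow path---this per-vertex ``one good out-edge'' argument is precisely how the proof sidesteps the whole-path selection obstacle you flag as the technical heart.
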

\begin{proof}
We first show that the expected cost of the solution is at most $30\sqrt{nk}\log n$ times the optimal cost. Then, we prove that the obtained solution is feasible.

The expected number of edges sampled at the second step is $10\sqrt{nk} \log n\; LP$, where $LP$ is the cost of the LP. The cost of every shortest-path tree is at most the cost of the optimal solution, since the 
optimal solution must contain at least $(n-1)$ edges (here we assume that $G$ is connected, since we can handle different connected components separately). Thus, the expected number of edges added at the third
step is at most $20\sqrt{nk} \log n~OPT$. The total expected cost of the solution is at most 
$30\sqrt{nk} \log n\;  OPT$.

We now prove that the solution returned by the algorithm is feasible with probability $1-n^{-3}$. It suffices
to show that with probability at least $1-n^{-3}$, for every adjacent vertices $u$ and $v$ ($(u,v)\in E$)
there exists a path in $H$ of length at most $k$ connecting $u$ and $v$. Fix an arbitrary $(u,v)\in E$.

Let as before $\mathcal{P}_{u,v}$ be the set
of all paths of length at most $k$ going from $u$ to $v$, and let
$$\mathcal{V}_{u,v} = \bigcup_{p\in \mathcal{P}_{u,v}} p$$
be the set of vertices covered by these paths. 
We consider two cases. If $|\mathcal{V}_{u,v}| \geq \sqrt{n/k}$, then the set $S$ sampled at the third
step of the algorithm contains at least one vertex from $\mathcal{V}_{u,v}$
with probability 
$$1 - \Big(1- \frac{10\sqrt{k}\log n}{\sqrt{n}}\Big)^{|\mathcal{V}_{u,v}|}\geq 
1 - \Big(1- \frac{10\log n}{\sqrt{n/k}}\Big)^{\sqrt{n/k}}\geq 1 - e^{-10 \log n} = 1-n^{-10}.$$
If $ S\cap \mathcal{V}_{u,v} \neq \varnothing$, we pick an $s\in S\cap \mathcal{V}_{u,v}$. Since, $s\in  \mathcal{V}_{u,v}$, $dist_G(u, s) + dist_G(s,v)\leq k$.
Hence, the union of the inward and outward shortest-path trees rooted at $s$ (that the algorithm adds to $E_H$ at the third step) contains a path going from $u$ to $v$ of length at most $k$.

We now consider the case $|\mathcal{V}_{u,v}| \leq \sqrt{n/k}$. Perform a mental experiment. Make $k$
copies of each vertex $w\in \mathcal{V}_{u,v}\setminus\{u\}$: $(w,1),(w,2),\dots, (w,k)$; and make a copy
of vertex $u$: $(u,0)$. For every
path $p\in \mathcal{P}_{u,v}$ define a new path $\tilde{p}$ from $u$ to $v$ as
$$\tilde{p}_i =  (p_i,i)$$
where $\tilde{p}_i$ and $p_i$ are the $i$-th vertices of $\tilde{p}$ and $p$ respectively. 
We assign flow $f_p$ to $\tilde{p}$. The total flow going from $(u,0)$ to $(v,k')$ (for $k'\leq k$)
in the new graph is at least $1$. We now start removing ``light'' vertices from the new graph. Initially, let $\widetilde{\calV} = \calV_{u,v}\times\{1,2,\dots, k\}\cup \{(u,0)\}$ (i.e., $\widetilde{\calV}$ is the set of all ``copies'' $(w,i)$) and $\widetilde{\calP} = \{\tilde{p}:p\in \calP_{u,v}\}$.
Note, that $|\widetilde{\calV}| \leq |\calV| \times k \leq \sqrt{nk}$.  At every 
step we find a vertex $(w,i)\in \widetilde{\calV}$, $w\ne v$ the flow through which is less than $1/\sqrt{2nk}$ (we call
this vertex a ``light'' vertex) i.e., a vertex $(w,i)\in \widetilde{\calV}$, $w\ne v$ such that
$$\sum_{\tilde{p}\in \widetilde{\calP}: w\in {p}} f_p \leq \frac{1}{\sqrt{2nk}}.$$
We then remove $(w,i)$ from $\widetilde{\calV}$ and all paths $\tilde{p}\ni (w,i)$ from $\widetilde{\calP}$.
We stop when there are no more such vertices $(w,i)$ left. Note, that after removing each vertex $w$ and all
paths going through it, we recompute the flow going through the remaining vertices. 

We remove at most $\sqrt{nk}$ vertices from $\widetilde{\calV}$ (simply because before removing ``light'' vertices the size of $\widetilde{\calV}$ was at most $\sqrt{nk}$), and at most $\sqrt{nk}/\sqrt{2nk} = \sqrt{2}/{2}$ units of flow. Thus, the remaining
weight of paths $\tilde{p}\in \widetilde{\calP}$ is at least $1-\sqrt{2}/{2} > 1/4$; and
thus  $(u,0)\in \widetilde{\calV}$.

Now, pick an arbitrary vertex $(w,i)\in \widetilde{\calV}$. Observe, that if $w\neq v$,
$$\sum_{(w,w')\in E\colon (w',i+1) \in \widetilde{\calV}} x_{(w,w')} \geq
\sum_{p\colon p_i=w,\; p_{i+1}\in \widetilde{\calV},\; p\in \calP_{u,v}} f_{p} 
\geq 
\sum_{\tilde{p}\in \widetilde{\calP}: w \in p} f_{p} \geq \frac{1}{\sqrt{2nk}}. 
$$
Hence, the set $E_H$ contains at least one edge $(w,w')$ such that $(w',i+1) \in \widetilde{V}$
with probability
\begin{eqnarray*}
1 - \prod_{(w,w')\colon (w',i+1) \in \widetilde{\calV}} (1 - 10 x_{(w,w')} \sqrt{nk}\log n) &\geq&
1 - \prod_{(w,w')\colon (w',i+1) \in \widetilde{\calV}} e^{- 10 x_{(w,w')} \sqrt{nk}\log n}\\
 &\geq& 1 - e^{-5\sqrt{2} \log n}> 1 - n^{-7},
\end{eqnarray*}
if for all  $(w',i+1)\in \widetilde{\calV}$, $x_{(w,w')} \leq (10 \sqrt{nk}\log n)^{-1}$; and
with probability $1$, otherwise. By the union bound,
with probability at least $1 - n^{-5}$ for all $(w,i)\in \widetilde{V}$ there is such $(w,w')$ in $E_H$. Thus,
there exists a path $w_1=u, w_2, \dots, w_{k'}=v$ of length at most $k$ such that $(w_i,i)\in \widetilde{V}$
and $(w_i, w_{i+1})\in E_H$ for every $i$. Hence, $u$ and $v$ are connected with a path of length at most $k$
with probability at least $1 - n^{-5}$.

We showed that for every $(u,v)\in E$, with probability $1-n^{-5}$, there exists a path in $H=(V,E_H)$
of length at most $k$ connecting $u$ and $v$; and therefore, with probability at least $1-1/n^{-3}$, $E_H$ is 
a $k$-spanner.
\end{proof}

\subsection{General case}

\begin{theorem}\label{thm:t2}
The approximation ratio of the randomized algorithm described above (with $\alpha = 5\log n$) is $15\sqrt{n}\log n$.
\end{theorem}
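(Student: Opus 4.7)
The plan is to mirror the two-pronged structure of Theorem~\ref{thm:t1}: bound the expected cost and then establish per-edge feasibility. The expected-cost bookkeeping is essentially identical---step~2 contributes $5\sqrt{n}\log n\cdot LP$ edges in expectation, and step~3 contributes at most $10\sqrt{n}\log n\cdot OPT$ edges (the expected size of $S$ is $5\sqrt{n}\log n$, each in/out shortest-path tree has at most $n-1\leq OPT$ edges, and we handle connected components separately)---which yields the claimed $15\sqrt{n}\log n$ ratio.

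For feasibility, fix $(u,v)\in E$ with $D = dist_G(u,v)$ and split on $|\mathcal{V}_{u,v}|$ against the threshold $\sqrt{n}$; the absence of a $\sqrt{k}$ factor in this threshold is precisely why a tighter ratio is possible here. In the ``large'' case $|\mathcal{V}_{u,v}|\geq\sqrt{n}$, $S$ hits $\mathcal{V}_{u,v}$ with probability at least $1-(1-5\log n/\sqrt{n})^{\sqrt{n}}\geq 1-n^{-5}$, and any such hit vertex $s$ sits on a length-$\leq kD$ path in $\mathcal{P}_{u,v}$, so the in/out shortest-path trees rooted at $s$ furnish a valid path in $E_H$, exactly as in Theorem~\ref{thm:t1}.

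The genuinely new work is the ``small'' case $|\mathcal{V}_{u,v}|<\sqrt{n}$, where the $k$-copy layering trick of Theorem~\ref{thm:t1} is too expensive (it would multiply the vertex count by $k$). My plan is to operate directly on $\mathcal{V}_{u,v}$: iteratively prune a vertex $w\ne v$ whose remaining flow $\sum_{p\ni w}f_p$ has dropped below $1/(2\sqrt{n})$, deleting $w$ and every path through it. Since $|\mathcal{V}_{u,v}|\leq\sqrt{n}$ and each prune costs at most $1/(2\sqrt{n})$ units of flow, at least half a unit survives on ``heavy'' vertices, each carrying flow $\geq 1/(2\sqrt{n})$. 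By LP flow-conservation, for every heavy $w\ne v$ the $x_e$-mass on out-edges into other heavy vertices is $\geq 1/(2\sqrt{n})$, so step~2 puts at least one such edge into $E_H$ with probability $\geq 1-\exp(-5\log n/2)$; a union bound over the $\leq\sqrt{n}$ heavy vertices gives that every heavy vertex has a heavy sampled out-edge, with high probability.

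The main obstacle---automatic in Theorem~\ref{thm:t1} because layering by step index forced progress---is to convert these local ``sampled heavy out-edge exists'' guarantees into an honest $u$-to-$v$ walk in $E_H$ of total length $\leq kD$. My intended resolution is to let the surviving LP flow itself steer the walk: since that flow decomposes into paths of length $\leq kD$, at each heavy vertex we insist on following a sampled out-edge that itself supports surviving flow, so the walk is always a concatenation of prefixes of valid flow paths and never accumulates more than $kD$ of length. Proving that this flow-guided walk actually terminates at $v$ without cycling is the delicate step; once it is in place, the concentration and union-bound bookkeeping follows the pattern of Theorem~\ref{thm:t1}.
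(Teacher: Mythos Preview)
Your cost bookkeeping and the large case $|\mathcal{V}_{u,v}|\geq\sqrt n$ match the paper and are correct. The genuine gap is in the small case. Your plan prunes light vertices and then argues that every surviving heavy vertex $w\neq v$ has $x_e$-mass at least $1/(2\sqrt n)$ on out-edges into heavy vertices; this local statement is fine. But the step you flag as ``delicate''---converting these local sampled out-edges into an actual $u$-to-$v$ path of length at most $kD$---is not just delicate, it is where the argument breaks. A ``flow-guided walk'' that at each step follows some sampled out-edge supporting surviving flow is \emph{not} a prefix of a single flow path; it is a concatenation of edges each of which lies on \emph{some} flow path, and nothing prevents the walk from cycling among heavy vertices indefinitely or from accumulating length far exceeding $kD$. (With general edge lengths there is not even a bound on the number of edges in a valid path, so the unit-length layering trick---which forced one unit of progress per step---has no analogue here.) Without an additional structural argument, the existence of a sampled heavy out-edge at every heavy vertex simply does not imply the existence of a short $u$--$v$ path in $E_H$.

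The paper abandons the greedy-walk idea entirely and instead uses a dual characterization: a subgraph $H'$ of $G[\mathcal{V}_{u,v}]$ contains a $u$--$v$ path of length at most $kD$ if and only if, for every arborescence $T\subset G[\mathcal{V}_{u,v}]$ rooted at $u$ with $d_T(u,v)>kD$, the edge set $E_H'$ meets the ``shortcut'' set $S_T=\{(w_1,w_2):d_T(u,w_2)>d_T(u,w_1)+\ell(w_1,w_2)\}$. Since every path in $\mathcal{P}_{u,v}$ must use an edge of $S_T$, the LP gives $\sum_{e\in S_T}x_e\geq 1$, and hence step~2 fails to hit $S_T$ with probability at most $e^{-\alpha\sqrt n}$. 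The crucial point is that this failure probability is exponentially small in $\sqrt n$, so one can afford a union bound over all arborescences on $\mathcal{V}_{u,v}$, of which there are at most $|\mathcal{V}_{u,v}|^{|\mathcal{V}_{u,v}|}\leq e^{(1/2)\sqrt n\log n}$. This global union bound is what replaces the per-vertex progress argument you were attempting.
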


\begin{remark}
The first several steps in the proof are the same as in the proof of Theorem~\ref{thm:t1}. However,
the proof of the main case ($|\mathcal{V}_{u,v}|\leq \sqrt{n}$) is very different from 
the previous proof.
\end{remark}

\begin{proof}
The expected number of edges sampled at the second step is $5\sqrt{n}
\log n\; LP$, where $LP$ is the cost of the LP. The cost of every
shortest-path tree is at most the cost of the optimal solution, since
the  optimal solution must contain at least $(n-1)$ edges (here we assume
that $G$ is connected, since we can handle different connected
components separately).  In expectation, we add the edges of at
most $2 \alpha \sqrt{n} = 10 \sqrt{n} \log n$ shortest-path trees, and
so, the expected number of edges added at the third step is at most
$10\sqrt{n} \log n~OPT$. The total expected cost of the solution is
at most   $15\sqrt{n} \log n\;  OPT$.

We now prove that the solution returned by the algorithm is feasible with probability at least 
$1-n^{-3}$.  Consider two arbitrary adjacent vertices $u$ and $v$ ($(u,v)\in E$). We show that with probability $1-n^{-5}$, there exists a path in $H$ of length at most $k\ell (u,v)$ 
connecting $u$ and $v$.

As in Theorem~\ref{thm:t1} we consider two cases:
the set $\mathcal{V}_{u,v} = \bigcup_{p\in \mathcal{P}_{u,v}} p$ is large 
($|\mathcal{V}_{u,v}|\geq \sqrt{n}|$) and small ($|\mathcal{V}_{u,v}|\leq \sqrt{n}|$).
For the case $|\mathcal{V}_{u,v}|\geq \sqrt{n}$, we use exactly the same proof as before to show
that $u$ and $v$ are connected with a short path which is contained in the union of two
shortest-path trees (see Theorem~\ref{thm:t1}).

So, consider the second case, $|\mathcal{V}_{u,v}|\leq \sqrt{n}$. 
Let $G'= G[\calV_{u,v}]$ be the graph induced on the vertex set $\calV_{u,v}$.
We denote the set of edges of $G'$ by $E' = \{(w_1,w_2)\in G:w_1,w_2\in \calV_{u,v}\}$, $H'=H\cap G'$ and 
$E_H' = E_H \cap E'$. We show that
the set $E_H$ satisfies the dual LP constraints with high probability.

Consider an arborescence $T\subset G'$ rooted at $u$. Let $d_T$ be the shortest 
path metric on $T$. Define a function $L_T: \calV_{u,v}\to \bbR^+$,  
$$L_T(w) = d_T(u,w),$$ 
and let 
$$S_T = \{(w_1,w_2)\in E' : L_T(w_2) > L_T(w_1) + \ell(w_1,w_2)\}.$$

In Section~\ref{sec:claims}, we prove the following claims.

\begin{claim}\label{claim:c1}
A subgraph $H'\subset G'$ contains a directed path of length at most $K$ connecting $u$ and $v$, if and only if for every arborescence $T\subset G'$ rooted at $u$ with $d_T(u,v)> K$, 
$$
E_H' \cap S_T \neq \varnothing.
$$
\end{claim}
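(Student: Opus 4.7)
The plan is to prove the two directions separately, using the convention $L_T(w) = +\infty$ for vertices $w$ not in the arborescence $T$. Under this convention every edge leaving $T$ lies in $S_T$, which turns out to be crucial. Throughout I write $A$ for the set of vertices of $\calV_{u,v}$ reachable from $u$ using only edges of $E_H'$.

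For the ($\Rightarrow$) direction I would fix any path $P = u = p_0, p_1, \ldots, p_m = v$ in $H'$ of total $\ell$-length at most $K$, together with any arborescence $T \subset G'$ rooted at $u$ with $d_T(u,v) > K$, and split on whether all vertices of $P$ lie in $T$. If they do, a telescoping identity gives
$$L_T(v) - L_T(u) \;=\; \sum_{i=0}^{m-1}\bigl[L_T(p_{i+1}) - L_T(p_i)\bigr],$$
and its left-hand side exceeds $K \geq \sum_i \ell(p_i,p_{i+1})$, so some summand must strictly exceed $\ell(p_i,p_{i+1})$, placing the $H'$-edge $(p_i,p_{i+1})$ in $S_T$. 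Otherwise I would take the smallest index $i^*$ with $p_{i^*}\notin T$ (note $i^*\geq 1$ since $p_0=u\in T$); then $L_T(p_{i^*-1}) < +\infty = L_T(p_{i^*})$, so the $H'$-edge $(p_{i^*-1},p_{i^*})$ lies in $S_T$ vacuously.

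For the ($\Leftarrow$) direction I would prove the contrapositive by exhibiting a single witnessing arborescence. Assume $H'$ contains no $u$-$v$ path of length $\leq K$, and let $T$ be a shortest-path arborescence in $H'$ rooted at $u$, computed using the $\ell$-weights; its vertex set is exactly $A$, and $L_T(w)=d_{H'}(u,w)$ for $w\in A$. Then $T\subset H'\subset G'$ is a valid arborescence, and $d_T(u,v) > K$ in both possible cases: when $v\in A$ this is exactly the standing hypothesis, and when $v\notin A$ one has $d_T(u,v) = +\infty$.

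It remains to check that $E_H' \cap S_T = \varnothing$. The key structural observation is that $A$ is closed under outgoing $H'$-edges: if $w_1 \in A$ and $(w_1,w_2)\in E_H'$, then $w_2$ is $H'$-reachable from $u$, so $w_2\in A$. Consequently any $(w_1,w_2)\in E_H'$ falls into one of two cases: either both endpoints lie in $A$, in which case the inequality $L_T(w_2)\leq L_T(w_1)+\ell(w_1,w_2)$ is the triangle inequality for $d_{H'}$; or $w_1\notin A$, in which case $L_T(w_1)=+\infty$ makes the defining strict inequality of $S_T$ vacuously false. Either way $(w_1,w_2)\notin S_T$. The only genuine subtlety, and the step I expect to require the most care in a write-up, is committing to the convention $L_T(w)=+\infty$ for $w\notin T$ (which the typing $L_T:\calV_{u,v}\to \R^+$ slightly obscures); once that is pinned down, the shortest-path tree of $H'$ serves as a clean universal witness and the proof is essentially immediate.
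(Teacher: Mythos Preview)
Your proposal is correct and follows essentially the same approach as the paper: the telescoping-sum argument for the forward direction, and the shortest-path tree in $H'$ as the witnessing arborescence for the reverse direction. Your treatment is in fact slightly more careful than the paper's, which tacitly assumes the arborescence spans $\calV_{u,v}$; your explicit convention $L_T(w)=+\infty$ for $w\notin T$ and your case-split on whether the path stays inside $T$ (and whether $v\in A$) patch exactly the gaps that the paper glosses over.
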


\begin{remark}
This claim is an analog of the special case of min-cut/max-flow theorem: A subgraph $H'$ contains 
a directed path connecting $u$ and $v$, if and only if every cut separating $u$ and $v$ in $G'$ contains an edge from $E_H'$.
\end{remark}

\begin{claim}\label{claim:c2}
For every arborescence $T\subset G'$ rooted at $u$ with $d_T(u,v)>
k\cdot \ell(u,v)$, 
$$\sum_{e\in S_T} x_e \geq 1,$$
where $\{x_{e}\}$ is the LP solution.
\end{claim}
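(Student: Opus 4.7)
The plan is to show that every LP-feasible path $P \in \calP_{u,v}$ must traverse at least one edge of $S_T$; combined with the LP constraints, this yields $\sum_{e \in S_T} x_e \geq 1$ via a simple double-counting argument. In other words, I want to treat $S_T$ as a fractional cut between $u$ and $v$ relative to the flow carried on $\calP_{u,v}$.

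The key combinatorial observation is that $L_T$ behaves like a potential function: an edge $(w_1,w_2)\in E'$ lies outside $S_T$ precisely when $L_T(w_2)\leq L_T(w_1)+\ell(w_1,w_2)$. Hence if some path $P=(p_0=u,p_1,\dots,p_m=v)\in \calP_{u,v}$ had no edge in $S_T$, telescoping would give
$$L_T(v)\;\leq\; L_T(u)+\sum_{i=0}^{m-1}\ell(p_i,p_{i+1})\;=\;\mathrm{length}(P)\;\leq\;k\cdot\ell(u,v),$$
contradicting the hypothesis $d_T(u,v)>k\cdot\ell(u,v)$. Note that $P$ automatically lies in $G'$, because $\calV_{u,v}$ is defined as the union of all such paths, so every edge of $P$ is in $E'$.

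Given the path-hitting property, I would conclude by stacking the LP constraints:
$$\sum_{e\in S_T} x_e \;\geq\; \sum_{e\in S_T}\sum_{P\in\calP_{u,v}:\,e\in P} f_P \;=\; \sum_{P\in\calP_{u,v}} f_P\cdot |P\cap S_T| \;\geq\; \sum_{P\in\calP_{u,v}} f_P \;\geq\; 1,$$
using in turn the capacity constraint $\sum_{P:\,e\in P}f_P\leq x_e$, the observation that $|P\cap S_T|\geq 1$, and the demand constraint $\sum_P f_P\geq 1$.

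I do not anticipate a serious obstacle: the statement is morally the fractional max-flow/min-cut analogue of Claim~\ref{claim:c1}, with $S_T$ playing the role of a cut and the LP flow supplying a unit of demand from $u$ to $v$. The only minor subtlety is interpreting $L_T$ on vertices that may not be reachable from $u$ in $T$; setting $L_T(w)=+\infty$ in that case places every incoming edge into $S_T$, which makes the telescoping argument go through uniformly and does not affect the conclusion.
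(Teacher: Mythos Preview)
Your proposal is correct and follows essentially the same approach as the paper: both argue that every path in $\calP_{u,v}$ must contain an edge of $S_T$ (the paper cites Claim~\ref{claim:c1} for this, while you reprove it inline via the same telescoping argument), and then both conclude via the identical chain $\sum_{e\in S_T} x_e \geq \sum_{e\in S_T}\sum_{P\ni e} f_P \geq \sum_P f_P \geq 1$ using the LP capacity and demand constraints.
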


We show that for every arborescence $T\subset G'$ rooted at $u$ with $d_T(u,v) > k \cdot \ell(u,v)$, 
$$E_H' \cap S_T \neq \varnothing$$
with high probability;
and, thus, by Claim~\ref{claim:c1} there exists a path in $E_H$ 
of length at most $k \cdot \ell(u,v)$ connecting $u$ and $v$.
Let $X_e$ be the indicator random variable defined as follows: $X_e = 1$ if $e\in E_H$; $X_e = 0$ otherwise. 
If for some $e\in S_T$, $x_e \geq 1/(\alpha\sqrt{n})$, then $X_e = 1$ with probability 1. Otherwise, for every $e\in S_T$, 
$\Pr (X_e = 1) = \alpha \sqrt{n} x_e$.

\medskip

Estimate the probability that for all $e\in S_T$, $X_i=0$,
\begin{eqnarray*}
\Pr(X_e = 0 \text { for all } e\in S_T) 
&=& \prod_{e\in S_T}(1 - \Pr(X_e = 1)) \leq \prod_{e\in S_T} e^{-\Pr(X_e = 1)}\\
&=& e^{-\sum_{e\in S_T}\Pr(X_e = 1)} =  e^{-\sum_{e \in S_T}  \alpha  x_e\sqrt{n}} \\
&=& e^{- \alpha  \sqrt{n} \Big(\sum_{e \in S_T} x_e\Big)}.
\end{eqnarray*}
By Claim~\ref{claim:c2},
$$\sum_{e\in S_T} x_e \geq 1,$$
thus,
$\Pr(X_e = 0 \text { for all } e\in S_T) \leq e^{-\alpha \sqrt{n}}$.
Hence, 
$\Pr(\cup_{e\in S_T} \{X_e = 1\}) \geq 1 - e^{-\alpha \sqrt{n}}$. In other words, for a fixed arborescence 
$T$, 
$$\Pr(S_T\cap E'_H) \geq 1 - e^{-\alpha \sqrt{n}}.$$
The total number of arborescences can be bounded by
$$|\calV_{u,v}|^{|\calV_{u,v}|}\leq \sqrt{n}^{\sqrt{n}} = e^{1/2\;\sqrt{n}\log n},$$
since for every vertex $w\in \calV_{u,v}$ there at most $|\calV_{u,v}|$ possible ways to choose a parent
node $w'\in \calV_{u,v}$. Hence, by the union bound with probability at least 
$$1 - e^{-\alpha \sqrt{n}}e^{1/2\;\sqrt{n}\log n} \geq  1 - e^{-\sqrt{n}\log n}$$
there exists a path of length at most $k \cdot \ell(u,v)$ between $u$ and $v$ in $H$.
\end{proof}

\subsection{Proofs of Claim~\ref{claim:c1} and Claim~\ref{claim:c2}}\label{sec:claims}

\TmpClaim{4}{
A subgraph $H'\subset G'$ contains a directed path of length at most $K$ connecting $u$ and $v$, if and only if for every arborescence $T\subset G'$ rooted at $u$ with $d_T(u,v)> K$, 
$$
E_H' \cap S_T \neq \varnothing.
$$
}

\begin{proof}
\begin{enumerate}
\item[I.] Suppose that there is a path $p$ of length at most $K$ connecting $u$ and $v$ in $H'$. Consider an arbitrary arborescence $T\subset G'$ rooted at $u$ with $d_T(u,v)> K$. Write, 
$$\text{length(p)}\equiv \sum_{i=1}^{|p|-1} \ell(p_i, p_{i+1}) \leq K.$$
Then,
$$\sum_{i=1}^{|p|-1} (L_T(p_{i+1}) - L_T(p_{i})) = L_T(p_{|p|}) - L_T(p_1) = L_T(v) - L_T(u) > K.$$
Hence, for some $i$,
$$\ell(p_i, p_{i+1}) < L_T(p_{i+1}) - L_T(p_{i}),$$
and therefore, $(p_{i}, p_{i+1})\in S_T$ (by the definition of $S_T$). Since, $(p_{i}, p_{i+1})\in E'_H$,
$S_T \cap E'_H \neq \varnothing$.

\item[II.] Now, assume that for every arborescence $T\subset G'$ rooted at $u$ with $d_T(u,v)> K$, 
$$
E_H' \cap S_T \neq \varnothing.
$$

Let $T$ be the directed shortest path tree in $E_H'$ rooted at $u$. We claim that 
$S_T\cap E'_H=\varnothing$ and thus $d_T(u,v) \leq K$. Indeed, for all $(w_1,w_2)\in E_H'$,
$$L_T(w_2)\equiv d_T(u,w_2) \leq d_T(u,w_1)+\ell(w_1,w_2) \equiv L_T(w_1) + \ell(w_1,w_2),$$ 
here we have used that $T$ is the shortest path tree and thus $d_T(u,w_2) \leq d_T(u,w_1)+\ell(w_1,w_2)$.
Therefore, $(w_1,w_2)\notin S_T$.
\end{enumerate}
\end{proof}

\TmpClaim{5}
{
For every arborescence $T\subset G'$ rooted at $u$ with $d_T(u,v)> k
\cdot \ell(u,v)$,
$$\sum_{e\in S_T} x_e \geq 1,$$
where $\{x_{e}\}$ is the LP solution.
}
\begin{proof}
Since $\{x_e\}$ is part of an LP solution:
\begin{align*}
\sum_{e \in S_T} x_e 
&\geq \sum_{e \in S_T} \sum_{p \in \calP_{u,v} : e \in p} f_p\\
&\geq \sum_{p \in \calP_{u,v}} f_p \\
&\geq 1
\end{align*}

The first and third inequalities are from the definition of the LP.
The second inequality follows because, by Claim \ref{claim:c1}, every
path $p \in \calP_{u,v}$ contains at least one edge in $S_T$ and
because each $f_p$ is nonnegative.
\ignore{
Since $\{x_{e}\}$ is an LP solution, there exists a flow $f_p$ routed on paths of length
at most $K=k \cdot \ell(u,v)$ between $u$ and $v$ in graph $G'$ with capacities $\{x_e\}$. Hence, for a given LP solution $\{x_{e}\}$, the value
of the following LP (with variables $f_p$, $p\in \calP_{u,v}$) is at least 1:

\begin{center}
\colorbox{lightgray}{
\begin{minipage}{3in}
$$\max \sum_{p\in \calP_{u,v}} f_p$$
subject to:
\begin{align*}
&\sum_{p \in \mathcal{P}_{u,v}: e \in p}f_p \leq x_e &\forall e \in E\\
&f_p \geq 0 & \forall p \in \mathcal{P}_{u,v}
\end{align*}
\end{minipage}
}
\end{center}

The dual LP (with variables $z_e$, $e\in E'$) is as follows:

\begin{center}
\colorbox{lightgray}{
\begin{minipage}{3in}
$$\min \sum_{e\in \E'} x_e z_e$$
subject to:
\begin{align*}
&\sum_{e \in p}z_e \geq 1 &\forall p \in \calP_{u,v}\\
&z_e \geq 0 & \forall e \in E'
\end{align*}
\end{minipage}
}
\end{center}

Consider a dual LP solution $\{z_e\}$: $z_e = 1$ if $e\in S_T$, and $z_e=0$ otherwise. To show that this
is a feasible solution, consider a path $p\in \calP_{uv}$. By Claim~\ref{claim:c1}, applied to the
set $E'_H$ containing edges of $p$, the path $p$ has a non-empty intersection with $S_T$. Hence,
$\sum_{e \in p} z_e \geq 1$.

By the LP duality theorem, 
$$\sum_{e\in S_T} x_e =  \sum_{e\in E'} x_e z_e \geq 1.$$}
\end{proof}

\section{Conclusion}

We proved above an $\tilde{O}(\sqrt{n})$-approximation for
\textsc{Directed $k$-Spanner}.  This settles the conjecture of 
Dinitz and Krauthgamer~\cite{DK10}. Note that Elkin and Peleg
\cite{ElkinPeleg07} have shown that the approximation ratio for this
problem cannnot be expected to be $O(n^{1/k})$ or even $O(n^{1/\log
  k})$, in contrast to the situation for undirected graphs.  

Our algorithm obviously applies to special cases of the {\sc Directed
  $k$-Spanner} problem, such as the \textsc{$k$-Transitive Closure
  Spanner} problem \cite{tc-spanners-soda}.  It also straightforwardly
extends to the \textsc{Client-Server 
  $k$-Spanner} problem and the \textsc{$k$-Diameter Spanning Subgraph}
problem. See \cite{ElkinPeleg05} for definitions and motivation; we
omit details here. Finally, consider
the \textsc{General Directed $k$-Spanner} problem, defined in
\cite{ElkinPeleg07}.  Here, each edge of the input directed graph has
a nonnegative weight as well as a length, and the objective is to
minimize the sum 
of the weights of the edges in the $k$-spanner.  We note that our
$\tilde{O}(\sqrt{n})$ approximation ratio still applies to this
problem, as long as the weight of each edge is lower-bounded by a
positive constant.

\bibliographystyle{alpha}
\bibliography{spanners-bibliography}

\end{document}